\newcommand{\keywords}[1]{\par\addvspace\baselineskip
\noindent\keywordname\enspace\ignorespaces#1}
\begin{document}

\mainmatter  

\title{Kolmogorov Complexity: Clustering Objects and Similarity}

\titlerunning{Kolmogorov Complexity: Clustering Objects and Similarity}
 
\author{Mahyuddin K. M. Nasution}

\authorrunning{M. K. M. Nasution}
\institute{Mathematic Department, Fakultas Matematika dan Ilmu Pengetahuan Alam\\
Universitas Sumatera Utara, Padang Bulan 20150 USU Medan Indonesia\\
\mailsa\\}

\toctitle{Bulletin of Mathematics}
\tocauthor{}
\maketitle

\begin{abstract}
The clustering objects has become one of themes in many studies, and do not few researchers use the similarity to cluster the instances automatically. However, few research consider using Kommogorov Complexity to get information about objects from documents, such as Web pages, where the rich information from an approach proved to be difficult to. In this paper, we proposed a similarity measure from Kolmogorov Complexity, and we demonstrate the possibility of exploiting features from Web based on hit counts for objects of Indonesia Intellectual.
\keywords{Kolmogorov complexity, distance, similarity, singleton, doubleton.}
\end{abstract}

\section{Introduction}
In mathematics, the object is an abstract arising in mathematics, generally is known as \emph{mathematical object}. Commonly they include numbers, permutations, partitions, matrices, sets, functions, and relations. In computer science, these objects can be viewed as binary strings, or strings in forms are words, sentences or documents. Thus we will refer to objects and string interchangeably in this paper. Therefore, sometimes some research also will refer to data as objects or objects as data.

A binary string has the length of the shortest program which can output the string on a universal Turing machine and then stop \cite{grunwald2003}. A universal Turing machine is an idealized computing device capable of reading, writing, processing instructions and halting \cite{sipser1996,montana1998}. The concept of Turing machine is widely used in theoretical computer science, as computational model based on mathematics to approach some problems of real-world. One of problems is about word sense, mainly about context. This problem appears in some applications like machine translation and text summarization, where mostly the existing system needs to understand the correct meaning (semantics relation) and function of words in natural language. This means that the aquasition of knowledge needs a model to abstracts an incomplete information. Therefore, this paper is to address a tool of measurement based on Kolmogorov complexity for finding relations among objects. We first review, in Section 2, the basic terminologies and the concepts. We state, in Section 3, the fundamental results and we discussion property of similarity in Lemma and Theorem. In Section 4, we study a set of objects from Indonesia intellectuals. 

\section{Related Work}
In mathematics, it is more important that objects be definable in some uniform way, for example as sets. Regardless of actual practice, in order to lay bare the essence of its paradoxes, which has traditionally accorded the management of paradox higher priority to objects, and it needs the faithful reflection of the details of mathematical practice as a justification for defining objects. Turing showed this problem in his famous work on the halting problem that it is impossible to write a computer program which is able to predict if some other program will halt \cite{leungyancheong1978,nannen2003}. Thus it is impossible to compute the complexity of a binary string. However there have been methods developed to approximate it, and Kolmogorov complexity is of length of the shortest program which can output the string, where objects can be given literally such as the human can be represented in DNA \cite{powell}.
 
Kolmogorov complexity, also known as algorithm entropy, stochastic complexity, descriptive complexity, Kolmogorov-Chaitin complexity and program-size complexity, is used to describe the complexity or degree of randomness of a binary string. It was independently developed by Andrey N. Kolmogorov, Ray Solomonoff and Gregory Chaitin in the late 1960's \cite{xiao2004,nannen2003}. For an introduction and details see the textbook \cite{li1997}.

\begin{definition}
\label{def:kolkomp}
The \emph{Kolmogorov complexity} of a string $x$, denoted as $K(x)$, is the length, in bits, of the shortest computer program of the fixed reference computing systems that produces $x$ as output.
\end{definition}

The choice of computing system changes the value of $K(x)$ by at most an additive fixed constant. Since $K(x) \stackrel{x}{\rightarrow} \infty$, this additive fixed constant is an ignorable quantity if $x$ is large. One way to think about the Kolmogorov complexity $K(x)$ is to view it as the length (bits) of the ultimate compressed version from which $x$ can be recovered by a general decompression program. The associated compression algorithm transform $x_z$ back into $x$ or a string very close to $x$. A loss compression algorithm is one in which the decompression algorithm exactly computes $x$ from $x_z$ and a loss compression algorithm is one which $x$ can be approximated from given $x_z$. Usually, the length $|x_z|<|x|$. Using a better compressor results in $x_b$ with no redundant information, usually $|x_b|<|x_z|$, etc. So, loss compression algorithms are used when there can be no loss of data between compression and decompression. When $K(x)$ is approximation corresponds to an upper-bound of $K(x)$ \cite{romashchenko2002}. Let $C$ be any compression algorithm and let $C(x)$ be the results of compressing $x$ using $C$.

\begin{definition}
\label{def:pendekatan}
The approximate Kolmogorov complexity of $x$, using $C$ as a compression algorithm, denoted $K_C(x)$, is
\[
K_C(x) = \frac{{\rm Length}(C(x))}{{\rm Length}(x)} + q = \frac{|C(x)|}{|x|}+q
\]
where $q$ is the length in bits of the program which implements $C$.
\end{definition}

If $C$ was able to compress $x$ a great deal then $K_C(x)$ is low and thus $x$ has low complexity. Using this approximation, the similarity between two finite objects can be compared \cite{ziv1978,romashchenko2002}.

\begin{definition}
\label{def:informasi}
The information shared between two string $x$ and $y$, denoted $I(x:y)$, is $I(x:y) = K(y)-K(y|x)$, where $K(y|x)$ is Kolmogorov complexity of $y$ relative to $x$, is the length of the shortest program which can output $y$ if $K(x)$ is given as additional input to the program.
\end{definition}
\begin{table}
\caption{Data compression}
\label{tabel:satu}
\begin{center}
\begin{tabular}{lllccc}\hline
$w$  & Key    & $C(w)$   & $~~|C(w)|~~$ & $~~|w|~~$ & $~~K_P(w)$\cr\hline
$s_1$&$k_1=0100$&$k_1k_2k_1k_3k_1k_4k_5k_6k_5k_5+$&34&40& 0.85\cr
     &$k_2=1101$&$"k_1=0100~k_2=1101~k_3=0001$     &  &  &\cr
     &$k_3=0001$&$k_4=1000~k_5=0101~k_6=1010"$     &  &  &\cr
     &$k_4=1000$&                                 &  &  &\cr
     &$k_5=0101$&                               &  &  &\cr
     &$k_6=1010$&                               &  &  &\cr
$s_2$&$k_1=0100$&$k_1k_1k_1k_1k_1k_7k_1k_8+$&20&32&0.625\cr
     &$k_7=1001$&$"k_1=0100~k_7=1001~k_8=1110"$& &&\cr
     &$k_8=1110$& & &&\cr
$s_3$&$k_5=1001$&$k_5k_6k_5k_5k_1k_1k_1k_7+$&24&32&0.75\cr
     &$k_6=1010$&$"k_5=1001~k_6=1010~k_1=0100$&&&\cr
     &$k_1=0100$&$k_7=1001"$&&&\cr
     &$k_7=1001$&&&&\cr
$s_1|s_2$&$k_2=1101$&$k_1k_2k_1k_3k_1k_4k_5k_6k_5k_5+$&30&40&0.75\cr
         &$k_3=0001$&$"k_2=1101~k_3=0001~k_4=1000$&&&\cr
         &$k_4=1000$&$k_5=0101~k_6=1010"$&&&\cr
         &$k_5=0101$&&&&\cr
         &$k_6=1010$&&&&\cr
$s_1|s_3$&$k_2=1101$&$k_1k_2k_1k_3k_1k_4k_5k_6k_5k_5+$&22&40&0.55\cr
         &$k_3=0001$&$"k_2=1101~k_3=0001~k_4=1000"$&&&\cr
         &$k_4=1000$&&&&\cr\hline
\end{tabular}
\end{center}
\end{table}
Previous classification research using Kolmogorov complexity has been based on the similarity metric developed \cite{vitanyi2005,cilibrasi2005}. Two strings which are similar share patterns and can be compressed more when concatenated than separately. In this way the similarities between data can be measured. This method has been successfully used to classify documents, music, email, and those are of: network traffic, detecting plagiarism, computing similarities between genomes and tracking the evaluation of chain letters \cite{bennett2003,burges1998,cilibrasi2004,cimiano2004,muir2003,patch2003}.

\section{Distance, Metric and Similarity}
Suppose there is a pattern matching algorithm based on compressing each consecutive set of four binary digits (hexadecimal). Let $C$ is the program that performs this compression. For each string $w$, $C$ generates a key of single characters which corresponding to sets of four digits. Let $s_1 = "b_0b_1b_1b_0b_1b_1b_1b_0"$ will generate keys $k_1 = b_0b_1b_1b_0$ and $k_2 = b_1b_1b_1b_0$. The compressed string is composed of the representation plus the key, i.e. $k_1k_2 + "k_1 = b_0b_1b_0b_4~ k_2 = b_1b_1b_1b_0"$. Suppose a second string $s_2 = b_0b_1b_1b_0b_1b_1b_0b_0$ and keys are $k_1 = b_0b_1b_1b_0$ and $k_3 = b_1b_1b_0b_0$, and then the compressed string of $s_2$ is $k_1k_3 + "k_1 = b_0b_1b_0b_4~k_3=b_1b_1b_0b_0$. We can write $C(s_1|s_2) = k_1k_2 + "k_2 = b_1b_1b_1b_0"$. Thus $|C(s_1|s_2)| < |C(s_1)|$ because there is a similar pattern in $s_1$ and $s_2$. For example, we have three strings
\[
\begin{array}{rcl}
s_1 &=& 0100~1101~0100~0001~0100~1000~0101~1010~0101~0101, \cr
s_2 &=& 0100~0100~0100~0100~0100~1001~0100~1110,~ {\rm and} \cr
s_3 &=& 1001~1010~1001~1001~0100~0100~0100~1001. \cr
\end{array}
\]
We can compress each string individually and also the results of compressing $s_1$ using the keys already developed for $s_2$ and $s_3$, Table 1.\par

\[
\begin{array}{rcccccl}
I_C(s_2:s_1) &=& K_P(s_1) - K_P(s_1|s_2) &=& 0.85-0.75 &=& 0.10\cr
I_C(s_3:s_1) &=& K_P(s_1) - K_P(s_1|s_3) &=& 0.85-0.55 &=& 0.30\cr
\end{array}
\]
Thus $I_C(s_3:s_1)>I_C(s_2:s_1)$ is that $s_1$ and $s_3$ share more information than $s_1$ and $s_2$. This defines that the information shared between two strings can be approximated by using a compression algorithm $C$. Therefore, the length of the shortest binary program in the reference universal computing system such that the program computes output $y$ from input $x$, and also ouput $x$ from input $y$, called \emph{information distance} \cite{benett1998,vitanyi2005,cilibrasi2005}.

\begin{definition}
\label{def:jarakinfo}
Let $X$ be a set. A function $E : X \times X \rightarrow {\bf R}$ is called \emph{information distance} (or dissimilarity) on $X$, denoted $E(x,y)$, i.e. $E(x,y) = K(x|y) - \min\{K(x),K(y)\}$ for all $x,y \in X$, it holds:
\begin{enumerate}
\item $E(x,y) \ge 0$,  (non-negativity);
\item $E(x,y) = E(y,x)$, (symmmetry) and;
\item $E(x,y) \le E(x,z)+E(z,y)$, (transitivity).
\end{enumerate}
\end{definition}

This distance $E(x,y)$ is actually a metric, but on properties of information distance these distances that are nonnegative and symmetric, i.e. for considering a large class of admissible distances, whereas computable in the sense that for every such distance $J$ there is a prefix program that has binary length equal to the distance $D(x,y)$ between $x$ and $y$. This means that
\[
E(x,y) \le D(x,y) + c_D
\]
where $c_D$ is a constant that depends only on $D$ but not on $x$ and $y$. Therefore, there are some distances related to one another with features that because it is not suitable. Thus we need to normalize the information distance.

\begin{definition}
\label{def:normaljarak}
Normalized information distance, denoted $N(x|y)$, is 
\[
N(x|y) = \frac{K(x|y)-\min\{K(x),K(y)\}}{\max\{K(x),K(y)\}}
\]
such that $N(x|y) \in [0,1]$.
\end{definition}

Analogously, if $C$ is a compressor and we use $C(x)$ to denote the length of the compressed version of a string $x$, we define {\it normalized compression distance}.

\begin{definition}
\label{def:jarakpadat}
Normalized compression distance, denoted $N_c(x|y)$, is
\[
N_c(x|y) = \frac{C(xy) - \min\{C(x),C(y)\}}{\max\{C(x),C(y)\}}
\]
where for convenience the pair $(x|y)$ is replaced by the concatenation $xy$.
\end{definition}

From Table \ref{tabel:satu}, we calculate $N_c(s_1|s_2) = \frac{30 -20}{34} =  0.294118$, whereas $N_c(s_1|s_3) = \frac{22 - 24}{34} = -0.058824$.

The string give a name to object, like "the three-letter genome of 'love'" or "the text of {\it The Da Vinci Code} by Dan Brown", also there are objects that do not have name literally, but acquire their meaning from their contexts in background common knowledge in humankind, like "car" or "green". The objects are classified by word, the words as objects are classified in the sentences where it represented how the society used the objects, and the words and the sentences are classified in documents.

\begin{definition}
\label{def:kata}
$W = \{w_1,\dots,w_v\}$ represents the number of unique words (i.e., vocabulary) and a word as grain of vocabulary indexed by $\{1,\dots,v\}$.
\end{definition}

\begin{definition}
\label{def:dokumen}
A document $d$ is a sequence of $n$ words denoted by ${\bf w} = \{w_i|i=1,\dots,n\}$, where $w_n$ denotes the $n$th word in a document.
\end{definition}

\begin{definition}
\label{def:korpus}
A {\it corpus} is a collection of $m$ documents denoted by ${\bf D} = \{d_j|j=1,\dots,m\}$, where $d_m$ denotes the $m$th document in a corpus.
\end{definition}

In real world, the corpus is divided two kind: annotated corpus and large corpus. 
The last definition is a representation of body of information physically limited by designing capacity for managing documents. Unfortunately, the modelling collection of document as the annotated corpus not only need more times and much cost to construct and then to manage it, but also this modelling eliminate dynamic property from it. Other side, the collection of digital documents on Internet as web have been increased extremely and changed continuously, and to access them generally based on indexes.

Let the set of document indexed by system tool be $\Omega$, where its cardinality is $|\Omega|$. In our example, $\Omega = \{k_1,\dots,k_8\}$, and $|\Omega| = 13$. Let every term $x$ defines {\it singleton event} ${\bf x}\subseteq \Omega$ of documents that contain an occurence of $x$. Let $P : \Omega\rightarrow [0,1]$ be the uniform mass probability function. The probability of event ${\bf x}$ is $P({\bf x}) = |{\bf x}|/|\Omega|$. Similarly, for terms $x$ AND $y$, the {\it doubleton event} ${\bf x} \cap {\bf y} \subseteq \Omega$ is the set of documents that contain both term $x$ and term $y$ (co-occurrence), where their probability together is $P({\bf x}\cap{\bf y}) = |{\bf x}\cap{\bf y}|/|\Omega|$. Then, based on other Boolean operations and rules can be developed their probability of events via above singleton or doubleton. From Table 1, we know  that term $k_1$ has $|k_1| = 3$ in $s_1$, $|k_1| = 6$ in $s_2$ and $|k_3| = 3$ in $s_3$. Probability of event $k_1$ is $P(k_1) = 3/13 = 0.230769$ because term $k_1$ is occurence in three string as document. Probability of event $\{k_1,k_5\}$ is $P(\{k_1,k_5\}) = 2/13 = 0.153846$ from $s_1$ dan $s_3$.\par
It has been known that the strings $x$ where the complexity $C(x)$ represents the length of the compressed version of $x$ using compressor $C$, for a search term $x$, search engine code of length $S(x)$ represents the shortest expected prefix-code word length of the associated search engine event ${\bf x}$. Therefore, we can rewrite the equation on Definition \ref{def:jarakpadat} as
\[
N_S(x,y) = \frac{S(x|y)-\min\{S(x),S(y)\}}{\max\{S(x),S(y)\}},
\]
called \emph{normalized search engine distance}.

 Let a probability mass function over set $\{\{x,y\}: x,y\in {\cal S}\}$ of searching terms by search engine based on probability events, where ${\cal S}$ is universal of singleton term. There are $|{\cal S}|$ singleton terms, and 2-combination of $|{\cal S}|$ doubleton consisting of a pair of non-identical terms, $x \not= y$, $\{x,y\}\subseteq {\cal S}$. Let $z \in {\bf x}\cap{\bf y}$, if ${\bf x} = {\bf x}\cap {\bf x}$ and ${\bf y} = {\bf y}\cap{\bf y}$, then $z \in {\bf x}\cap {\bf x}$ and $z \in {\bf y}\cap{\bf y}$. For $\Psi = \sum_{\{x,y\}\subseteq{\cal S}}|{\bf x}\cap{\bf y}|$, it means that $|\Psi|\ge|\Omega|$, or $|\Psi|\le\alpha|\Omega|$, $\alpha$ is constant of search terms. Consequently, we can define $p(x) = \frac{P({\bf x})|\Omega|}{|\Psi|} = \frac{|{\bf x}|}{|\Psi|}$, and for ${\bf x} = {\bf x}\cap {\bf x}$, we have $p(x) = \frac{P({\bf x})|\Omega|}{|\Psi|} = \frac{P({\bf x}\cap {\bf x})|\Omega|}{|\Psi|} = p(x,x)$ or $p(x,x) = \frac{|{\bf x}\cap{\bf x}|}{|\Psi|}$.

For $P({\bf x}|{\bf y})$ means a conditional probability, so $p(x) = p(x|x)$ and $p(x|y) = P({\bf x}\cap {\bf y})|\Omega|/|\Psi|$. 
Let $\{k_1,k_5\}$ is a set, there are three subsets contain $k_1$ or $k_5$: $\{k_1\}$, $\{k_5\}$, and $\{k_1,k_5\}$. Let we define an analogy, where $S(x)$ and $S(x|y)$ mean $p(x)$ and $p(x|y)$. Based on normalized search engine distance equation, we have
\begin{equation}
\label{pers:normalisasi}
\begin{array}{rcl}
N_S(x,y) &=& \frac{|{\bf x}\cap{\bf y}|/|{\Psi}| - \min (|{\bf x}|/|{\Psi}|,|{\bf y}|/|{\Psi}|)}{\max(|{\bf x}|/|{\Psi}|,|{\bf y}|/|{\Psi}|)}\cr
&=& \frac{|{\bf x}\cap{\bf y}| - \min (|{\bf x}|,|{\bf y}|)}{\max(|{\bf x}|,|{\bf y}|)}\cr
\end{array}
\end{equation}

\begin{definition}
\label{def:similaritas}
Let $X$ be a set. A function $s : X\times X\rightarrow {\bf R}$ is called \emph{similarity} (or {\it proximity}) on $X$ if $s$ is non-negative, symmetric, and if $s(x,y) \leq s(x,x)$, $\forall x,y \in X$, with an equality if and only if $x=y$.
\end{definition}

\begin{lemma}
\label{lemma:ukuran} If $x,y\in X$, $s(x,y) = 0$ is a minimum weakest value between $x$ and $y$ and $s(x,y) = 1$ is a maximum strongest value betweem $x$ and $y$, then a function $s: X\times X\rightarrow [0,1]$, such that $\forall x,y\in X$, $s(x,y) \in [0,1]$.
\end{lemma}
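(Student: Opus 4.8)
The plan is to read the statement as an elementary consequence of Definition \ref{def:similaritas} together with the two stipulations on the extremal values, and to establish the two bounds $s(x,y)\ge 0$ and $s(x,y)\le 1$ separately. For the lower bound I would simply invoke the non-negativity clause in the definition of a similarity: $s(x,y)\ge 0$ for all $x,y\in X$. Since the hypothesis names $0$ as the \emph{weakest} (minimum) value that $s$ can take, this is exactly the left endpoint of the target interval, and nothing further is needed here.

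For the upper bound I would use the defining inequality $s(x,y)\le s(x,x)$ (and, by symmetry, $s(x,y)\le s(y,y)$), which says that no off-diagonal value exceeds the corresponding self-similarity. The hypothesis identifies $1$ as the \emph{strongest} (maximum) value, attained on the diagonal under the intended normalization, so $s(x,x)=1$ and hence $s(x,y)\le s(x,x)=1$ for every pair. Chaining the two estimates yields $0\le s(x,y)\le 1$, i.e. $s$ restricts to a map $X\times X\to[0,1]$, which is the assertion. An alternative, essentially equivalent route would be to exhibit $s$ as $1-N$ for the normalized distance of Definition \ref{def:normaljarak}; since $N(x,y)\in[0,1]$, the bounds on $s$ follow at once, but the abstract argument above is cleaner and does not presuppose that particular construction.

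The only genuinely delicate point — and what I would treat as the main obstacle — is justifying that the ``maximum strongest value'' $1$ is a \emph{global} constant, so that a single number can serve as a uniform upper bound for all pairs; a priori $s(x,x)$ and $s(y,y)$ need not coincide. I expect to close this gap using the equality case of Definition \ref{def:similaritas}: $s(x,y)=s(x,x)$ forces $x=y$, so the diagonal strictly dominates every distinct pair, and the assumption that the extremal strength is a fixed quantity (rather than a pair-dependent supremum) makes the normalization $s(x,x)=1$ coherent across all $x\in X$. Once this is granted, the proof is just the two-line chain of inequalities above, with no computation required.
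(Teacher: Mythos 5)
Your argument is sound for what the lemma actually asserts, but it follows a genuinely different route from the paper's. You argue axiomatically from Definition~\ref{def:similaritas}: non-negativity gives $s(x,y)\ge 0$, the clause $s(x,y)\le s(x,x)$ together with the hypothesis that $1$ is the (global) maximum strongest value gives $s(x,y)\le 1$, and you explicitly flag and patch the only real gap, namely that $s(x,x)$ must equal the same constant $1$ for every $x$. The paper instead argues by a concrete model of $s$ in terms of occurrence counts: it sets up the ratio $0\le |{\bf x}|/|{\bf X}|\le 1$, identifies $s(x,x)$ with $|{\bf x}|/|{\bf x}|=1$ (so in that model your uniformity worry is resolved by construction), identifies the weakest value with $|{\bf z}|/|{\bf X}|=0$ for $z\notin X$, and bounds the mixed case by the product $(|{\bf x}|/|{\bf X}|)(|{\bf y}|/|{\bf X}|)=|{\bf x}||{\bf y}|/|{\bf X}|^2\in[0,1]$. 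Your abstract treatment is cleaner and does not commit to any particular realization of $s$, which makes the logical dependence on Definition~\ref{def:similaritas} and on the stated extremal values transparent; the paper's ratio-based computation is less general (it tacitly fixes a frequency interpretation of $s$, including the somewhat ad hoc product form for $s(x,y)$) but it buys continuity with the rest of the paper, since the same occurrence-count semantics (singleton and doubleton events, hit counts) is exactly what is used in Equation~(\ref{pers:normalisasi}), Theorem~\ref{teorema:similaritas} and the experiments. Both reach the same conclusion; be aware only that, read your way, the lemma is very nearly a restatement of its own hypotheses, which is a feature of how the statement is phrased rather than a defect of your proof.
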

\begin{proof}
Let $|X|$ is a cardinality of $X$, and $|x|$ is a number of $x$ occured in $X$, the ratio between $X$ and $x$ is $0\leq |{\bf x}|/|{\bf X}|\leq 1$, where $|{\bf x}|\leq|{\bf X}|$. 

The $s(x,x)$ means that a number of $x$ is compared with $x$-self, i.e. $|{\bf x}|/|{\bf x}| = 1$, or $\forall x\in X$, $|{\bf X}|/|{\bf X}|=1$. Thus $1 \in [0,1]$ is a closest value of $s(x,x)$ or called a maximum strongest value. 

In other word, let $z\not\in X$, $|{\bf z}| = 0$ means that a number of $z$ do not occur in $X$, and the ratio between $z$ and $X$ is $0$, i.e., $|{\bf z}|/|{\bf X}| = 0$. Thus $0\in [0,1]$ is a unclosest value of $s(x,z)$ or called a minimum weakest value.

The $s(x,y)$ means that a ratio between a number of $x$ occured in $X$ and a number of $y$ occured in $X$, i.e., $|{\bf x}|/|{\bf X}|$ and $|{\bf y}|/|{\bf X}|$, $x,y\in X$. If $|{\bf X}|=|{\bf x}|+|{\bf y}|$, then $|{\bf x}|<|{\bf X}|$ and $|{\bf y}|<|{\bf X}|$, or $(|{\bf x}|/|{\bf X}|)(|{\bf y}|/|{\bf X}|) = |{\bf x}||{\bf y}|/|{\bf X}|^2 \leq 1$ and $|{\bf x}||{\bf y}|/|{\bf X}|^2 \geq 0$. Thus $s(x,y) \in [0,1]$, $\forall x,y\in X$.
\end{proof}

\begin{theorem}
\label{teorema:similaritas}
$\forall x,y\in X$, the similarity of $x$ and $y$ in $X$ is 
\[
s(x,y) = \frac{2|{\bf x}\cap {\bf y}|}{|{\bf x}|+|{\bf y}|}+c
\]
where $c$ is a constant.
\end{theorem}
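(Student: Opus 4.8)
The plan is to \emph{build} $s$ out of the normalized‑distance machinery of Section 3 and then check that the resulting function satisfies every clause of Definition \ref{def:similaritas}. I would work in the probabilistic setting already introduced, where $p(x) = |{\bf x}|/|\Psi|$ and $p(x|y) = |{\bf x}\cap{\bf y}|/|\Psi|$, and symmetrize the conditional ratio: set $s(x,y) = \dfrac{p(x|y)+p(y|x)}{p(x)+p(y)} + c$. Cancelling the common factor $|\Psi|$ and using $|{\bf x}\cap{\bf y}| = |{\bf y}\cap{\bf x}|$ gives exactly $s(x,y) = \dfrac{2|{\bf x}\cap{\bf y}|}{|{\bf x}|+|{\bf y}|} + c$, the claimed form. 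Equivalently, one may start from the symmetric dissimilarity $d(x,y) = |{\bf x}\triangle{\bf y}| / (|{\bf x}|+|{\bf y}|)$, observe $|{\bf x}\triangle{\bf y}| = |{\bf x}| + |{\bf y}| - 2|{\bf x}\cap{\bf y}|$, and put $s = 1 - d$ up to the additive constant $c$, which plays here the same role as the offsets $q$ in Definition \ref{def:pendekatan} and $c_D$ after Definition \ref{def:jarakinfo}: an implementation‑dependent quantity that vanishes in the idealized measure.

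Next I would verify the three defining properties. Non‑negativity and symmetry are immediate, since all cardinalities are non‑negative and both $|{\bf x}\cap{\bf y}|$ and $|{\bf x}|+|{\bf y}|$ are symmetric; by Lemma \ref{lemma:ukuran} the ratio already lies in $[0,1]$, so a suitably small $c$ (indeed $c=0$ in the exact case) keeps $s$ in range. For $s(x,y) \le s(x,x)$: since $z \in {\bf x}\cap{\bf y}$ implies $z \in {\bf x}\cap{\bf x} = {\bf x}$ and $z\in{\bf y}\cap{\bf y}={\bf y}$, we have $|{\bf x}\cap{\bf y}| \le \min(|{\bf x}|,|{\bf y}|) \le \frac12(|{\bf x}|+|{\bf y}|)$, whence $s(x,y) \le 1 + c = s(x,x)$. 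Finally, $s(x,y) = s(x,x)$ forces $2|{\bf x}\cap{\bf y}| = |{\bf x}|+|{\bf y}|$, which combined with $|{\bf x}\cap{\bf y}| \le \min(|{\bf x}|,|{\bf y}|)$ squeezes $|{\bf x}| = |{\bf y}| = |{\bf x}\cap{\bf y}|$, i.e.\ ${\bf x} = {\bf y}$, and the converse is a direct substitution.

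The main obstacle I anticipate is not the algebra but the bookkeeping around two points. First, the constant $c$ must be constrained (essentially $|c|$ small, or $c=0$ idealized) so that the range statement of Lemma \ref{lemma:ukuran} is preserved and the "only if" direction does not collapse; this should be stated in the same spirit as the treatment of $q$ and $c_D$ earlier. Second, the "if and only if $x=y$" clause of Definition \ref{def:similaritas} requires the singleton‑event assignment $x \mapsto {\bf x}$ to be injective on the term set, so that ${\bf x} = {\bf y}$ genuinely yields $x = y$; I would make this assumption explicit. Everything else reduces to the single inequality $|{\bf x}\cap{\bf y}| \le \min(|{\bf x}|,|{\bf y}|)$ already exploited in deriving equation (\ref{pers:normalisasi}), so the remaining steps are routine.
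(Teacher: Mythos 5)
Your proposal is correct, but it takes a genuinely different route from the paper. The paper argues backward from the normalized search-engine distance (\ref{pers:normalisasi}): it sets $d = 1-s$, equates $1-s$ with the expression in (\ref{pers:normalisasi}), plugs in the extreme values $s=1$ and $s=0$ supplied by Lemma \ref{lemma:ukuran} to force $|{\bf x}\cap {\bf y}| = \min(|{\bf x}|,|{\bf y}|)$ and $|{\bf x}\cap {\bf y}| = |{\bf x}|+|{\bf y}|$ respectively, and then, through the special case $x=y$, rewrites the resulting unit ratio as $2|{\bf x}\cap {\bf y}|/(|{\bf x}|+|{\bf y}|)$, concluding with the explicit value $c=1$. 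You instead construct the function directly (symmetrizing $p(x|y)$ and $p(y|x)$, equivalently taking the complement of the symmetric-difference dissimilarity) and then verify each clause of Definition \ref{def:similaritas}: non-negativity, symmetry, $s(x,y)\le s(x,x)$ via $|{\bf x}\cap {\bf y}|\le \min(|{\bf x}|,|{\bf y}|)\le \frac{1}{2}(|{\bf x}|+|{\bf y}|)$, and the equality case squeezed down to ${\bf x}={\bf y}$, with the needed injectivity of $x\mapsto {\bf x}$ made explicit. What your route buys is rigor: it actually establishes that the Dice-type formula is a similarity in the sense of Definition \ref{def:similaritas}, and your point that $c$ must be constrained (essentially $c=0$) to respect the range $[0,1]$ of Lemma \ref{lemma:ukuran} is sound, whereas the paper's boundary-case manipulation ends with $c=1$, which does not cohere with that range. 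What the paper's route buys is the (informal) link back to the compression-distance machinery, presenting the formula as the similarity counterpart of $N_S(x,y)$ under the transform $d=1-s$; if you want to preserve that motivation, add a short remark relating your $d=1-s$ to the normalized distance of (\ref{pers:normalisasi}), but keep your axiom-by-axiom verification as the actual proof.
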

\begin{proof}
By Definition \ref{def:jarakinfo} and Definition \ref{def:similaritas}, the main transforms is used to obtain a distance (dissimilarity) $d$ from a similarity $s$ are $d = 1-s$, and from (\ref{pers:normalisasi})
we obtain $1-s = \frac{|{\bf x}\cap{\bf y}| - \min (|{\bf x}|,|{\bf y}|)}{\max(|{\bf x}|,|{\bf y}|)}$.

Based on Lemma \ref{lemma:ukuran}, for maximum value of $s$ is 1, we have $
0 = \frac{|{\bf x}\cap{\bf y}| - \min (|{\bf x}|,|{\bf y}|)}{\max(|{\bf x}|,|{\bf y}|)}
$ 
or $|{\bf x}\cap {\bf y}| = \min (|{\bf x}|,|{\bf y}|)$. For minimum value of $s$ is 0, we obtain
\[
1 = \frac{|{\bf x}\cap{\bf y}| - \min (|{\bf x}|,|{\bf y}|)}{\max(|{\bf x}|,|{\bf y}|)}
\]
or
\[
\begin{array}{rcl}
|{\bf x}\cap{\bf y}| &=& \max(|{\bf x}|,|{\bf y}|)+\min (|{\bf x}|,|{\bf y}|)\cr
&=& |{\bf x}|+|{\bf y}|\cr
\end{array} 
\]
or $1 = (|{\bf x}\cap{\bf y}|)/(|{\bf x}|+|{\bf y}|)$.
We know that $|{\bf x}|+|{\bf y}| > |{\bf x}\cap{\bf y}|$, because their ratios are not 1. If $x = y$, then $|{\bf x}\cap{\bf y}| = |{\bf x}| = |{\bf y}|$, its consequence is 
$1 = (2|{\bf x}\cap{\bf y}|)/(|{\bf x}|+|{\bf y}|)$. Therefore, we have
$
s = \frac{2|{\bf x}\cap{\bf y}|}{|{\bf x}|+|{\bf y}|}+1
$, and $c = 1$, or
\[
s = \frac{2|{\bf x}\cap{\bf y}|}{|{\bf x}|+|{\bf y}|}+c.
\]
\end{proof}

For normalization, we define $|{\bf x}| = \log f(x)$ and $2|{\bf x}\cap{\bf y}| = \log (2f(x,y))$, and the similarity on Definition \ref{def:simmetrikm} satisfies Theorem \ref{teorema:similaritas}.

\begin{definition}
\label{def:simmetrikm}
Let similarity metric I is a function $s(x,y) : X \times X \rightarrow [0,1]$, $x,y\in X$. We define similarity metric M as follow:
\[
s(x,y) = \frac{\log(2f(x,y))}{\log(f(x)+f(y))} 
\]
\end{definition}

In \cite{cilibrasi2005}, they developed Google similarity distance for Google search engine results based on Kolmogorov complexity:
\[
NGD(x,y) = \frac{\max\{\log f(x),\log f(y)\}-\log f(x,y)}{\log N - \min\{\log f(x),\log f(y)\}}
\]
For example, at the time, a Google search for "horse", returned 46,700,000 hits, for "rider" was returned 12,200,000 hits, and searching for the pages where both "rider" and "rider" occur gave 2,630,000. Google indexed $N = 8,058,044,651$ web pages, and $NGD(horse,rider)$ $\approx 0.443$. Using equation in Defenition 10, we have $(s,y) \approx 0.865$, about two times the results of Google similarity distance. At the time of doing the experment, we have 150,000,000 and 57,000,000 for "horse" and "rider" from Google, respectively. While the number of hits for the search both terms "horse" AND "rider" is 12,400,000, but we will not have $N$ exactly, aside from predicting it. We use similarity metric M for comparing returned results of Google and Yahoo!, Table \ref{tabel:dua}.

\begin{table}
\caption{Similarity for two results.}
\label{tabel:dua}
\begin{center}
\begin{tabular}{|l|rrr|r|}\hline
Search engine & $x$ (= "horse") & $y$ (= "rider")& $x$ AND $y$ & $s(x,y)$\cr\hline
Google  & 150,000,000 & 57,000,000 & 12,400,000 & 0.889187\cr
Yahoo! & 737,000,000 & 256,000,000 & 52,000,000 & 0.891084\cr\hline
\end{tabular}
\end{center}
\end{table}

\section{Application and Experiment}
Given a set of objects as points, in this case a set of authors of Indonesian Intellectuals from {\it Commissie voor de Volkslectuur} and their works (Table \ref{tabel:balaipustaka}), and a set of authors of Indonesian Intellectuals from New Writer with their works (Tabel \ref{tabel:pujanggabaru}). 

\begin{table}
\caption{Indonesian Intellectual of {\it Commissie voor de Volkslectuur}}
\label{tabel:balaipustaka} 
\begin{center}
\begin{tabular}{cllccc}\hline
id & Name of Indesian Intellectual & Year & Author & Value & Type\cr\hline
a. & Azab dan Sengsara & 1920 &1 & 0.7348 & 7\cr
b. & Binasa kerna Gadis Priangan & 1931 &1&0.6569&6\cr
c. & Cinta dan Hawa Nafsu  & & 1&0.4357&4\cr
d. & Siti Nurbaya & 1922 & 2&0.5706&6\cr
e. & La Hami & 1924& 2&0.3831&4\cr
f. & Anak dan Kemenakan & 1956& 2&0.5461&5\cr
g. & Tanah Air & 1922 & 3&0.6758&7\cr
h. & Indonesia, Tumpah Darahku & 1928 &3&0.5183&5\cr
i. & Kalau Dewi Tara Sudah Berkata  & & 3&0.4582&5\cr
j. & Ken arok dan Ken Dedes & 1934 &3&0.4922&5\cr
k. & Apa Dayaku karena Aku Seorang Perempuan&  1923& 4&0.5374&5\cr
l. & Cinta yang Membawa Maut & 1926&4&0.8189&8\cr
m. & Salah Pilih & 1928&4&0.7476&7\cr
n. & Karena Mentua & 1932&4&0.6110&6\cr
o. & Tuba Dibalas dengan Susu & 1933&4&0.5918&6\cr
p. & Hulubalang Raja & 1934 & 4&0.7759&7\cr
q. & Katak Hendak Menjadi Lembu & 1935 & 4&0.8424&8\cr
r. & Tak Disangka & 1923 & 5&0.4811&5\cr
s. & Sengsara Membawa Nikmat & 1928 & 5&0.6006&6\cr
t. & Tak Membalas Guna & 1932 & 5&0.5139&5\cr
u. & Memutuskan Pertalian & 1932 & 5&0.6150&6\cr
v. & Darah Muda & 1927 & 6&0.3632&4\cr
w. & Asmara Jaya & 1928 & 6&0.3896&4\cr
x. & Pertemuan & 1927 & 7&0.2805&2\cr
y. & Salah Asuhan & 1928 & 8&0.7425&7\cr
z. & Pertemuan Djodoh & 1933 & 8&0.4376&4\cr
aa. & Menebus Dosa & 1932 & 9&0.4531&5\cr
ab. & Si Cebol Rindukan Bulan & 1934 & 9&0.7516&7\cr
ac. & Sampaikan Salamku Kepadanya & 1935 & 9&0.5786&6\cr\hline
\end{tabular}
\end{center}
\end{table}

\begin{table}
\caption{Indonesian Intellectual of New Writer}
\label{tabel:pujanggabaru}
\begin{center}
\begin{tabular}{cllccc}\hline
id & Name of Indoensian Intelectual & Year & Author & Value & Type\cr\hline
A. & Dian Tak Kunjung Padam & 1932 & i & 0.6372 & 6\cr
B. & Tebaran Mega (kumpulan sajak) & 1935 & i & 0.6189& 6\cr
C. & Layar Terkembang & 1936 & i & 0.7494 & 7\cr
D. & Anak Perawan di Sarang Penyamun & 1940 & i&0.6095&6\cr
E. & Di Bawah Lindungan Ka'bah & 1938 & ii & 0.4302&4\cr
F. & Tenggelamnya Kapal van der Wijck & 1939 & ii&0.7245&7\cr
G. & Tuan Direktur & 1950 & ii&0.6506&6\cr
H. & Didalam Lembah Kehidoepan & 1940 & ii&0.3723&4\cr
I. & Belenggu & 1940 & iii& 0.6007&6\cr
J. & Jiwa Berjiwa & & iii&0.4669&5\cr
K. & Gamelan Djiwa (kumpulan sajak) & 1960 & iii&0.6055&6\cr
L. & Djinak-djinak Merpati (sandiwara) & 1950 & iii&0.6378&6\cr
M. & Kisah Antara Manusia (kumpulan cerpen) & 1953 & iii&0.5380&5\cr
O. & Pancaran Cinta & 1926 & iv&0.5393&5\cr
P. & Puspa Mega & 1927 & iv&0.5681&6\cr
Q. & Madah Kelana & 1931 & iv&0.6477&6\cr
R. & Sandhyakala Ning Majapahit & 1933 & iv&0.6035&6\cr
S. & Kertajaya & 1932 & iv&0.4872&5\cr
T. & Nyanyian Sunyi & 1937 & v&0.5249&5\cr
U. & Begawat Gita & 1933 & v&0.3175&2\cr
V. & Setanggi Timur & 1939 & v&0.5058&5\cr
W. & Bebasari: toneel dalam 3 pertundjukan & & vi&0.5918&6\cr
X. & Pertjikan Permenungan & & vi&0.4988&5\cr
Y. & Kalau Tak Untung & 1933 & vii&0.3611&4\cr
Z. & Pengaruh Keadaan & 1937 & vii&0.3655&4\cr
AA. & Ni Rawit Ceti Penjual Orang & 1935 & viii&0.7906&8\cr
AB. & Sukreni Gadis Bali & 1936 & viii&0.7492&7\cr
AC. & I Swasta Setahun di Bedahulu & 1938 & viii&0.7882&8\cr
AD. & Rindoe Dendam & 1934 & ix&0.6034&6\cr
AE. & Kehilangan Mestika & 1935 & x& 0.5132&5\cr
AF. & Karena Kerendahan Boedi & 1941 & xi&0.8084&8\cr
AG. & Pembalasan & & xi&0.4057&4\cr
AH. & Palawija & 1944 & xii&0.3886&4\cr\hline
\end{tabular}
\end{center}
\end{table}

The authors of {\it Commissie voor de Volkslectuur} are a list of 9 person names:\\

\noindent
$\{$(1) Merari Siregar; (2) Marah Roesli; (3) Muhammad Yamin; (4) Nur Sutan Iskandar; (5) Tulis Sutan Sati; (6) Djamaluddin Adinegoro; (7) Abas Soetan Pamoentjak; (8) Abdul Muis; (9) Aman Datuk Madjoindo$\}$. \\

\noindent
While the authors of New Writer are 12 peoples, i.e., \\

\noindent
$\{$(i) Sutan Takdir Alisjahbana; (ii) Hamka; (iii) Armijn Pane; (iv) Sanusi Pane; (v) Tengku Amir Hamzah; (vi) Roestam Effendi; (vii) Sariamin Ismail; (viii) Anak Agung Pandji Tisna; (ix) J. E. Tatengkeng; (x) Fatimah Hasan Delais; (xi) Said Daeng Muntu; (xii) Karim Halim$\}$.\\

\begin{table}
\begin{tabular}{r|ccccccccc|cccccccccccc|cccccccccccccccccccccccccccccccc}
    & & & & & & & & &  & & & & & & & &v& & & & &&&&&&&&&&&&&&&&&&&&&&&&&&&&& \cr
    & & & & & & & & &  & & &i& & & &v&i& & & &x&&&&&&&&&&&&&&&&&&&&&&&&&&&&&\cr
    & & & & & & & & &  & &i&i&i& &v&i&i&i& &x&i&&&&&&&&&&&&&&&&&&&&&&&&&&&a&a&a\cr
    &1&2&3&4&5&6&7&8&9 &i&i&i&v&v&i&i&i&x&x&i&i&a&b&c&d&e&f&g&h&i&j&k&l&m&n&o&p&q&r&s&t&u&v&w&x&y&z&a&b&c\cr\hline
1   & &6&5&6&7&5&5&5&7 &7&5&6&6&5&6&6&6&6&5&6&4&7&6&4&6&4&6&4&4&5&6&5&4&4&6&4&5&5&4&6&5&5&4&4&5&5&6&5&6&6\cr 
2   &6& &5&6&6&6&5&5&6 &6&5&6&6&5&6&5&6&6&5&6&4&5&6&4&6&4&5&5&5&5&6&5&4&5&5&4&5&5&5&5&5&5&5&5&5&5&6&5&5&5\cr 
3   &5&5& &8&5&5&2&8&4 &6&6&5&6&7&5&4&4&5&4&4&7&5&4&6&6&5&5&7&5&5&5&4&6&6&4&4&5&5&6&6&6&5&7&6&7&6&4&5&4&4\cr 
4   &6&6&8& &7&5&4&9&6 &8&7&6&7&8&6&5&5&6&5&5&7&7&5&7&7&5&6&7&6&7&6&5&8&7&6&6&7&8&7&7&8&7&7&7&8&7&5&6&5&6\cr 
5   &7&6&5&7& &6&5&6&7 &6&5&6&6&5&6&7&7&6&5&7&4&6&6&5&5&4&6&5&5&5&6&5&5&5&6&5&6&5&5&6&5&6&5&5&5&5&7&5&6&6\cr 
6   &5&6&5&5&6& &5&5&6 &6&4&5&5&4&6&7&5&5&5&6&4&4&6&2&4&2&5&4&5&5&5&4&4&2&5&4&4&4&4&4&4&4&4&4&4&4&6&4&5&5\cr 
7   &5&5&2&4&5&5& &4&7 &5&2&5&5&2&5&7&6&5&8&8&2&4&8&2&4&2&4&2&4&4&5&4&2&2&5&4&4&4&2&4&2&4&2&2&2&2&7&4&8&4\cr 
8   &5&5&8&9&6&5&4& &5 &7&7&6&7&8&5&5&5&5&4&5&8&7&4&8&7&7&6&8&5&6&5&5&8&7&5&5&6&7&8&7&8&6&8&8&8&7&4&7&4&6\cr
9   &7&6&4&6&7&6&7&5&  &6&4&6&6&4&6&7&7&6&7&8&2&7&5&8&4&5&4&5&4&4&5&6&5&4&2&6&4&5&5&4&5&4&4&4&4&4&8&5&7&6\cr\hline

i   &7&6&6&8&6&6&5&7&6 & &6&8&8&5&6&6&6&6&5&6&4&6&5&5&6&4&6&5&5&5&6&5&5&5&5&5&5&5&5&5&5&6&5&5&6&6&5&5&5&5\cr 
ii  &5&5&6&7&5&4&2&7&4 &6& &5&5&6&4&4&4&4&2&4&7&6&4&7&6&5&6&7&4&5&4&4&7&7&5&4&6&6&6&6&7&5&7&6&8&6&4&6&2&5\cr 
iii &6&6&5&6&6&5&5&6&6 &8&5& &8&5&6&6&6&6&5&6&4&6&6&5&5&4&5&5&5&5&6&5&5&4&5&5&5&5&4&5&5&5&5&5&5&5&6&5&5&5\cr 
iv  &6&6&6&7&6&5&5&7&6 &8&5&8& &5&6&5&6&6&5&6&5&5&5&5&5&4&5&5&5&5&6&5&5&5&5&5&5&5&5&5&5&5&5&5&6&5&5&5&5&5\cr 
v   &5&5&7&8&5&4&2&8&4 &5&6&5&5& &5&4&5&5&4&5&7&5&4&6&5&4&5&6&4&5&5&4&6&6&5&4&6&6&6&5&6&5&6&6&7&6&4&5&4&5\cr 
vi  &6&6&5&6&6&6&5&5&6 &6&4&6&6&5& &6&6&6&5&6&4&5&6&4&5&4&5&4&5&5&6&5&4&4&5&4&5&4&4&5&4&5&4&4&4&4&6&4&5&5\cr 
vii &6&5&4&5&7&7&7&5&7 &6&4&6&5&4&6& &6&6&7&7&4&4&7&4&4&2&5&2&5&5&5&5&4&2&5&5&5&4&4&4&4&5&4&4&4&4&7&4&7&5\cr 
viii&6&6&4&5&7&5&6&5&7 &6&4&6&6&5&6&6& &6&6&7&4&5&7&4&5&4&5&4&4&5&6&5&2&2&6&4&5&5&4&5&4&5&4&4&4&4&7&4&6&5\cr 
ix  &6&6&5&6&6&5&5&5&6 &6&4&6&6&5&6&6&6& &5&6&2&5&6&4&5&4&5&4&5&5&6&5&4&4&5&5&5&5&4&5&4&5&4&4&4&4&6&4&5&5\cr 
x   &5&5&4&5&5&5&8&4&7 &5&2&5&5&4&5&7&6&6& &7&2&4&7&4&4&2&4&2&4&4&5&4&4&2&5&4&4&4&2&4&2&4&2&2&2&4&7&4&7&4\cr 
xi  &6&6&4&5&7&6&8&5&8 &6&4&6&6&5&6&7&7&6&7& &2&5&9&4&5&4&5&2&4&5&6&5&2&2&6&4&5&5&4&5&4&5&4&4&4&4&8&4&7&6\cr 
xii &4&4&7&7&4&4&2&8&2 &4&7&4&5&7&4&4&4&6&2&2& &5&2&6&5&6&4&7&4&4&2&4&6&6&4&4&5&5&6&6&6&5&7&7&7&6&2&5&2&4\cr 
\end{tabular}
\begin{center}
{\bf Fig. 1}: Matrix of relations.
\end{center}
\end{table}

In a space provided with a distance measure, we extract more information from Web using Yahoo! search engines, then we build the associated distance matrix which has entries the pairwise distance between the objects laying on Definition \ref{def:simmetrikm}. We define some type of relations between author and his/her works in 9 categories: (1) unclose (value $<0.11$), (2) weakest ($0.11\le$ value $<0.22$), (3) weaker ($0.22\le$ value $<0.33$), (4) weak ($0.33\le$ value $<0.44$), (5) midle ($0.44\le$ value $<0.56$), (6) strong ($0.56\le$ value $<0.67$), (7) stronger ($0.67 \le$ value $<0.78$), (8) strongest ($0.78\le$ value $<0.89$), and (9) close (value $\ge 0.89$).

Specifically, some of Indonesia intellectuals of {\it Commissie voor de Volkslectuur} and New Writer be well-known because their works, mainly the works from famous authors which are popularity in society, but also there are visible works because its name is familiar (or same name), for example the story of "Begawat Gita" from Tengku Amir Hamzah, or because the given name frequently appear as words in work of other people or web pages, for example the story of "Pertemuan" from Abas Soetan Pamoentjak, see Table \ref{tabel:balaipustaka} and Table \ref{tabel:pujanggabaru}.

Generally, the appearance of strong interactions in web pages among {\it Commissie voor de Volkslectuur} and New Writer. This situation derive from the time the works appear in the same range of years, or adjacent. In other words, we know that New Writer is the opposition idea of {\it Commissie voor de Volkslectuur} \cite{sutherland1968}, so in any discussion about Indonesia intellectuals, the both always contested and discussed together, see  Fig. 1.

\section{Conclusions and Future Work}

The proposed similarity has the potential to be incorporated into enumerating for generating relations between objects. It shows how to uncover underlying strength relations by exploiting hit counts of search engine, but this work do not consider length of queries. Therefore, near future work is to further experiment the proposed similarity and look into the possibility of enhancing the performance of measurements in some cases.


\begin{thebibliography}{12}
\bibitem{grunwald2003} Grunwald, P. D., and Vitanyi, P. M. B. 2003. Kolmogorov complexity and information theory. {\it Journal of Logic, Language and Information} 12: 497-529. 
\bibitem{sipser1996} Sipser, M. 1996. {\it Introduction to the Theory of computations}, PWS Publishing C, Boston.
\bibitem{montana1998} Monta\~{n}a, J. L., and Pardo, L. M. 1998. On Kolmogorov complexity in the real Turing machine setting. {\it Information Processing Letters}, 67: 81-86.
\bibitem{leungyancheong1978} Leung-Yan-Cheong, S. K., and Cover, T. M. 1978. Some equivalences between Shannon entropy and Kolmogorov Complexity. {\it IEEE Trans Info Theory}, Vol. IT-24, No. 3, May.
\bibitem{nannen2003} Nannen, V. 2003. A short introduction to Kolmogorov Complexity. http://volker.nannen.com/work/mdl/
\bibitem{powell} Powell, D. R., Dowe, D. L., Allison, L., and Dix, T. I. Discovering simple DNA sequences by compression. Monash University Technical Report: monash.edu.au.
\bibitem{xiao2004} Xiao, H. 2004. Komogorov complexity: computational complexity course report. Quenn's University, Kingnston, Ontario, Canada.
\bibitem{li1997} Li, M. and Vitanyi, P. 1997. {\it An Introduction to Kolmogorov Complexity and Its Applications}, Springer, NY.  
\bibitem{romashchenko2002} Romashchenko, A., Shen, A., and Vereshchagin, N. 2002. Combinatorial interpretation of Kolmogorov complexity. {\it Theoretical Computer Science}, 271: 111-123.
\bibitem{ziv1978} Ziv, J., and Lempel, A. 1978. Compression of individual sequences via variable-rate encoding. {\it IEEE Trans Inform Theory} IT-24:530–536
\bibitem{vitanyi2005} Vitanyi, P. 2005. Universal similarity. In M. J. Dinneen et al. (eds.), {\it Proc. Of IEEE ISOC ITW2005 on Coding and Complexity}: 238-243.
\bibitem{cilibrasi2005} Cilibrasi, R. L., and Vitanyi, P. M. B. 2005. The Google similarity distance. {\it IEEE ITSOC Information Theory Workshop 2005 on Coding and Complexity}, 29th August-1st Sept. Rotorua, New Zealand. 
\bibitem{bennett2003} Bennett, C. H., Li, M., and Ma, B. 2003. Chain letters and evolutionary histories. {\it Scientific Am.}, June: 76-81.
\bibitem{burges1998} Burges, C. J. C. 1998. A tutorial on support vector machines for pattern recognition. {\it Data Mining and Knowledge Discovery}, Vol. 2, No. 2: 121-167.
\bibitem{cilibrasi2004} Cilibrasi, R., Wolf, R. de, and Vitanyi, P. 2004. Algorithmic clustering of music based on string compression. {\it Computer Music J.}, Vol. 28, No. 4: 49-67.
\bibitem{cimiano2004} Cimiano, P. and Staab, S. 2004. Learning by Googling. {\it SIGKDD Explorations}, Vol. 6, No. 2: 24-33.
\bibitem{muir2003} Muir, H. 2003. Software to unzip identity of unknown compressors. {\it New Scientist}, Apr.
\bibitem{patch2003} Patch, K. 2003. Software sorts tunes. {\it Technology Research News}, 23, 30 Apr.
\bibitem{benett1998} Benett, C. H., G\'{a}cs, Li M., Vitanyi, P. M. B., and Zurek, W. 1998. Information distance. {\it IEEE Transactions on Information Theory}, Vol. 4, No. 4, July: 1407-1423.
\bibitem{sutherland1968} Sutherland, H. 1968. Pudjangga Baru: Aspects of Indonesian Intellectual Life in the 1930s. {\it Indonesia}, Vol. 6, October: 106-127.

\end{thebibliography}
\end{document}